\documentclass[11pt,a4paper]{article}

\usepackage[utf8]{inputenc}
\usepackage[T1]{fontenc}
\usepackage{amsmath,amssymb,amsthm}
\usepackage{booktabs}
\usepackage{graphicx}
\usepackage{algorithm}
\usepackage{algpseudocode}
\usepackage{hyperref}
\usepackage[margin=1in]{geometry}
\usepackage{enumitem}
\usepackage{xcolor}

\hypersetup{
    colorlinks=true,
    linkcolor=blue,
    citecolor=blue,
    urlcolor=blue
}

\newtheorem{theorem}{Theorem}
\newtheorem{definition}{Definition}

\title{Peak + Accumulation: A Proxy-Level Scoring Formula\\for Multi-Turn LLM Attack Detection}

\IfFileExists{local_author.tex}{
  \author{J Alex Corll}

}{
  \author{
    The Parapet Project \\
    \url{https://github.com/Parapet-Tech/parapet}
  }
}

\date{\today}

\begin{document}

\maketitle

% ------------------------------------------------------------------
\begin{abstract}
% ------------------------------------------------------------------

Multi-turn prompt injection attacks distribute malicious intent across
multiple conversation turns, exploiting the assumption that each turn is
evaluated independently. While single-turn detection has been extensively
studied, to our knowledge prior proxy-level work does not provide a
fully specified deterministic formula for aggregating per-turn pattern
scores into a conversation-level risk score---without invoking an LLM.
We identify a fundamental flaw in the intuitive
weighted-average approach: it converges to the per-turn score regardless
of turn count, meaning a 20-turn persistent attack scores identically to
a single suspicious turn. Drawing on analogies from change-point
detection (CUSUM), Bayesian belief updating, and security risk-based
alerting, we propose \emph{peak~+~accumulation scoring}---a formula
combining peak single-turn risk, persistence ratio, and category
diversity. Evaluated on 10,654 multi-turn conversations---588 attacks
sourced from WildJailbreak adversarial prompts and 10,066 benign
conversations from WildChat---the formula achieves 90.8\% recall at
1.20\% false positive rate with an F1 of 85.9\%. A sensitivity analysis
over the persistence parameter reveals a phase transition at $\rho
\approx 0.4$, where recall jumps 12 percentage points with negligible
FPR increase. We release the scoring algorithm, pattern library, and
evaluation harness as open source.

\end{abstract}

% ------------------------------------------------------------------
\section{Introduction}
\label{sec:intro}
% ------------------------------------------------------------------

LLM API proxies---firewalls that sit between client applications and
language model endpoints---are a widely deployed defense layer.
Cloudflare AI Gateway, AWS Bedrock Guardrails, Azure AI Content Safety,
and numerous open-source projects intercept every API request, inspect
the \texttt{messages[]} array, and enforce security policies before the
request reaches the model.

These proxies face a fundamental constraint: they must make
allow/block decisions \emph{without calling an LLM}. Adding a
classifier LLM introduces latency (100--500ms per request), cost
(per-token inference charges), and recursive vulnerability (the
classifier itself can be prompt-injected~\cite{attackermovessecond}).
Proxy-level defenses therefore rely on deterministic techniques: regex
pattern matching, heuristic scoring, and statistical analysis.

Single-turn detection at the proxy level is well-studied. Pattern
matching against known injection phrases~\cite{rebuff}, TF-IDF
classifiers~\cite{promptscreen}, perplexity-based
heuristics~\cite{nemoguardrails}, and hybrid approaches combining
embeddings with handcrafted features~\cite{dmpipmhfe} all operate
without LLM inference. Multi-turn detection, however, has received far
less attention at the proxy level. Existing multi-turn work---Defensive
M2S~\cite{defensivem2s}, MindGuard~\cite{mindguard},
AprielGuard~\cite{aprielguard}---requires LLM-based classification.

The gap is specific: \textbf{to our knowledge, prior proxy-level work
does not provide a fully specified formula for converting per-turn
pattern scores into a conversation-level risk score using only
proxy-computable operations.} This paper addresses that gap.

Multi-turn attacks are not theoretical. Yang et al.~\cite{yangsimpler}
demonstrate that multi-turn jailbreaks achieve $>$70\% attack success
rate against models hardened for single-turn protection. Crescendo
attacks~\cite{crescendo} jailbreak most models in under 5 turns.
MTJ-Bench~\cite{mtjbench} shows that jailbroken states persist across
subsequent turns. The proxy sees each of these requests and has the full
conversation history available---but without a scoring formula, it
cannot act on the cross-turn signal.

\paragraph{Contributions.} We make four contributions:
\begin{enumerate}[nosep]
  \item We identify and prove the \emph{weighted-average ceiling}---a
    mathematical property that makes weighted averaging fundamentally
    unsuitable for multi-turn risk scoring (\S\ref{sec:failure}).
  \item We propose \emph{peak~+~accumulation scoring}, a formula with
    three additive signals: peak risk, persistence ratio, and category
    diversity (\S\ref{sec:formula}).
  \item We evaluate on 10,654 multi-turn conversations---588 attacks
    sourced from WildJailbreak~\cite{wildjailbreak} adversarial prompts
    and 10,066 benign conversations from WildChat---showing 90.8\%
    recall at 1.20\% false positive rate (\S\ref{sec:eval}).
  \item We release the algorithm, regex pattern library, and evaluation
    harness as open source.\footnote{\url{https://github.com/Parapet-Tech/parapet} (Apache 2.0)}
\end{enumerate}

% ------------------------------------------------------------------
\section{Related Work}
\label{sec:related}
% ------------------------------------------------------------------

\subsection{Single-Turn Proxy-Level Detection}

Several approaches operate at the proxy level without LLM inference.
DMPI-PMHFE~\cite{dmpipmhfe} combines DeBERTa-v3-base embeddings with a
heuristic feature channel including synonym matching (binary flags for 8
attack patterns with WordNet expansion), many-shot detection (flags
$>$3 Q\&A pairs), and repeated token detection (flags $>$3
repetitions), achieving 97.94\% accuracy on safeguard-v2.
PromptScreen~\cite{promptscreen} uses TF-IDF with a linear SVM,
demonstrating that simple models remain competitive when combined with
other layers. NVIDIA's NeMo Guardrails~\cite{nemoguardrails} offers two
computable heuristics: length-per-perplexity ratio and prefix/suffix
perplexity thresholds, targeting GCG-style adversarial suffixes.
Rebuff~\cite{rebuff} implements a 4-layer defense including keyword
permutation matching. All of these operate on single turns.

\subsection{Multi-Turn Detection (LLM-Required)}

Multi-turn detection has been addressed, but exclusively through
LLM-based classification. Defensive M2S~\cite{defensivem2s} compresses
multi-turn conversations into single-turn representations using
hyphenization, then classifies with a guardrail model (Qwen3Guard),
achieving 93.8\% recall while reducing inference tokens by 94.6\%.
MindGuard~\cite{mindguard} uses clinically-grounded risk taxonomies with
turn-level annotations and 4B--8B parameter classifiers.
AprielGuard~\cite{aprielguard} trains an 8B parameter safeguard model on
multi-turn conversations with structured reasoning traces. A LesWrong
proposal~\cite{leswrong} describes the closest architecture to
proxy-level scoring---$P(\text{jailbreak}) = f_E(\text{response})$ with
a two-threshold system---but uses an LLM evaluator for $f_E$.

None of these approaches are proxy-computable.

\subsection{Multi-Turn Attack Characterization}

Recent empirical work characterizes the multi-turn threat. Yang et
al.~\cite{yangsimpler} show that multi-turn jailbreaks are approximately
equivalent to resampling single-turn attacks, with success following
$S(k) = A - B \cdot e^{-ck}$---an exponential approach curve where more
attempts monotonically increase success. This implies that
\emph{persistence detection} is the primary signal, not pattern
escalation. Crescendo~\cite{crescendo} demonstrates gradual escalation
where each user prompt is deliberately innocuous, making the attack
undetectable by pattern matching. MTJ-Bench~\cite{mtjbench} identifies
two persistence scenarios: continued follow-up queries and cross-topic
jailbreak persistence. The taxonomy in~\cite{guardingguardrails}
identifies seven mechanism-oriented jailbreak families including
escalation and persistence.

\subsection{Scoring Analogies from Adjacent Fields}
\label{sec:analogies}

To our knowledge, no prior proxy-level formula addresses this specific
aggregation problem; adjacent
fields have well-established approaches to accumulating evidence over
time:

\begin{itemize}[nosep]
  \item \textbf{CUSUM} (Cumulative Sum): Change-point detection for time
    series. Accumulates deviations from an expected value; an alarm
    triggers when the cumulative sum exceeds a threshold. Repeated small
    signals combine into strong detection.
  \item \textbf{Bayesian belief updating}: Used in insider threat
    detection. Prior probability is updated with each new piece of
    evidence. Evidence monotonically moves the posterior toward the
    hypothesis.
  \item \textbf{Splunk Risk-Based Alerting (RBA)}: Aggregates anomalous
    behaviors into a risk score (1--100). Each anomaly \emph{adds} to
    the score rather than being averaged. Designed explicitly for
    ``death by a thousand cuts'' patterns.
\end{itemize}

The common principle is \textbf{accumulation, not averaging}. More
evidence produces a higher score. This is the principle our formula
embodies.

% ------------------------------------------------------------------
\section{The Weighted Average Failure}
\label{sec:failure}
% ------------------------------------------------------------------

\subsection{The Formula}

A natural first attempt at multi-turn scoring is to compute a weighted
average of per-turn scores, giving more weight to later turns (recency
bias):

\begin{equation}
  w_i = 1 + \frac{i}{n-1}, \qquad
  \text{cum} = \frac{\sum_{i=0}^{n-1} s_i \cdot w_i}{\sum_{i=0}^{n-1} w_i}
  \label{eq:weighted-avg}
\end{equation}

\noindent where $s_i$ is the risk score for turn $i$, $n$ is the total
number of scored turns, and weights range linearly from 1.0 to 2.0.

\subsection{The Ceiling Property}

\begin{theorem}[Weighted Average Ceiling]
\label{thm:ceiling}
When all turns produce the same score $s$, the weighted average equals
$s$ regardless of the number of turns $n$.
\end{theorem}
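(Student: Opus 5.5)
The plan is to substitute the constant sequence $s_i = s$ into Eq.~\eqref{eq:weighted-avg} and factor $s$ out of the numerator. Since $s$ does not depend on $i$, we get $\sum_{i=0}^{n-1} s\, w_i = s \sum_{i=0}^{n-1} w_i$. The fraction therefore collapses to $\text{cum} = s$, provided the denominator $W_n := \sum_{i=0}^{n-1} w_i$ is nonzero. The particular recency weights play no role here: the argument only uses the fact that a weighted average of a constant is that constant.

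The one step that needs care is showing that $W_n$ is positive and well defined.
\begin{itemize}[nosep]
  \item For $n \ge 2$, each weight satisfies $w_i = 1 + i/(n-1) \in [1,2]$, so $W_n \ge n > 0$. If one wants it explicitly, $W_n = n + \tfrac{1}{n-1}\cdot\tfrac{n(n-1)}{2} = \tfrac{3n}{2}$, though this is not needed.
  \item For $n = 1$, the expression $i/(n-1)$ is $0/0$, so the formula is undefined as written. I would handle this with the natural convention that a single weight is $w_0 = 1$; any positive value works. Then $\text{cum} = s_0 = s$ directly.
\end{itemize}
This edge case is the only real obstacle, and it is a matter of convention rather than mathematics. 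I would state the convention explicitly so that the claim ``regardless of $n$'' covers every $n \ge 1$.

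Finally, I would note the interpretation that motivates the section. Because $\text{cum} = s$ for all $n$, the output is invariant under the number of turns. A 20-turn conversation of repeated score-$s$ turns therefore receives exactly the score of a single such turn, which is the ceiling behavior the theorem names.

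A slightly stronger version follows from the same bound. Whenever all $s_i \le s$, the score satisfies $\text{cum} \le s$, because $\text{cum}$ is a convex combination of the $s_i$ with coefficients $w_i / W_n$. So no number of turns can push the weighted average above the peak per-turn score. This justifies the word ``ceiling'' and sets up the contrast with the accumulation principle of \S\ref{sec:analogies}.
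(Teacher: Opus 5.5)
Your proof is correct and takes the same approach as the paper: substitute $s_i = s$, factor $s$ out of the numerator, and cancel $\sum w_i$. Your extra checks, that the denominator equals $3n/2 > 0$ for $n \ge 2$ and that $n = 1$ needs a convention because $i/(n-1)$ becomes $0/0$, fill in details the paper's one-line proof skips; your convex-combination remark is the same bound $\text{cum} \le s$ that the paper states right after the proof.
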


\begin{proof}
If $s_i = s$ for all $i$:
\[
  \text{cum} = \frac{\sum_{i=0}^{n-1} s \cdot w_i}{\sum_{i=0}^{n-1} w_i}
             = \frac{s \cdot \sum w_i}{\sum w_i}
             = s
\]
\end{proof}

This is not a bug in a specific implementation---it is a mathematical
property of \emph{any} weighted average when scores are uniform. The
consequence is severe: a conversation where \emph{every} turn matches a
role\_confusion pattern (weight 0.5) scores exactly 0.5, identical to a
single suspicious turn in an otherwise clean conversation. With a
detection threshold of 0.7, persistent single-category attacks are
\textbf{undetectable} by weighted averaging, regardless of conversation
length.

More broadly, for any per-turn score $s < \text{threshold}$, the
weighted average is bounded: $\text{cum} \leq s <
\text{threshold}$. The weighted average cannot ``break through'' a
threshold that exceeds the maximum per-turn score. This directly
contradicts the security intuition that persistence should increase
suspicion.

\subsection{Empirical Impact}

We evaluated the weighted average formula on 9 handcrafted multi-turn
attack sequences designed to exercise five attack categories
(instruction seeding, role confusion, deferred authority, escalation
probing, and resampling). Five of the nine attacks---all with patterns
matching on every user turn---scored between 0.3 and 0.5, well below the
0.7 detection threshold. Table~\ref{tab:ceiling-examples} shows three
representative cases.

\begin{table}[h]
\centering
\caption{Weighted average scores for persistent multi-turn attacks.
All turns match patterns, yet scores remain below the 0.7 threshold.}
\label{tab:ceiling-examples}
\begin{tabular}{@{}llccl@{}}
\toprule
ID & Attack Type & Turns & Score & Verdict \\
\midrule
mt-atk-003 & Role confusion (all turns) & 4 & 0.50 & Allow \\
mt-atk-004 & Role confusion + persona & 3 & 0.50 & Allow \\
mt-atk-006 & Escalation probing & 4 & 0.30 & Allow \\
\bottomrule
\end{tabular}
\end{table}

% ------------------------------------------------------------------
\section{Peak + Accumulation Scoring}
\label{sec:formula}
% ------------------------------------------------------------------

\subsection{Design Principles}

Our formula is guided by four principles derived from the adjacent-field
analogies in \S\ref{sec:analogies}:

\begin{enumerate}[nosep]
  \item \textbf{Peak sensitivity.} A single highly suspicious turn
    should contribute its full score. The peak is a lower bound on
    conversation risk.
  \item \textbf{Persistence reward.} More matching turns should increase
    the score. A conversation where 4/4 turns match is more suspicious
    than one where 1/4 turns match, even if the peak is identical.
  \item \textbf{Diversity reward.} Attacks spanning multiple categories
    (e.g., instruction seeding \emph{and} role confusion \emph{and}
    authority claims) are more suspicious than single-category
    repetition, as they suggest deliberate multi-vector probing.
  \item \textbf{Additive stacking.} Independent signals (escalation
    gradient, resampling detection) should stack additively rather than
    being averaged away.
\end{enumerate}

\subsection{Per-Turn Scoring}

Each user or tool message is scanned against a set of regex pattern
categories. Each category $c$ has a weight $w_c \in [0, 1]$. A turn's
score is the sum of matched category weights, clamped to $[0, 1]$:

\begin{equation}
  s_i = \min\!\left(1,\; \sum_{c \in \text{matched}(i)} w_c\right)
\end{equation}

Our default configuration uses five categories:

\begin{table}[h]
\centering
\caption{Default cross-turn pattern categories.}
\label{tab:categories}
\begin{tabular}{@{}lcp{6.5cm}@{}}
\toprule
Category & Weight & Example Pattern \\
\midrule
instruction\_seeding & 0.4 & \texttt{remember this for later}, \texttt{in my next message} \\
role\_confusion & 0.5 & \texttt{you are now in developer mode}, \texttt{switch to unrestricted} \\
deferred\_authority & 0.3 & \texttt{admin said it was ok}, \texttt{override authorized} \\
escalation\_probing & 0.3 & \texttt{can you try to bypass}, \texttt{what if you pretend} \\
repetition\_resampling & 0.2 & (detected algorithmically) \\
\bottomrule
\end{tabular}
\end{table}

\subsection{The Scoring Formula}

Given $n$ scored turns with per-turn scores $s_1, \ldots, s_n$:

\begin{definition}[Peak + Accumulation Score]
\begin{align}
  \text{peak} &= \max_i\; s_i \\
  \text{match\_ratio} &= \frac{|\{i : s_i > 0\}|}{n} \\
  \text{diversity} &= \max(0,\; |\text{distinct categories}| - 1) \cdot \delta \\[4pt]
  \text{score} &= \text{clamp}\!\Big(
    \text{peak} + \text{match\_ratio} \cdot \rho
    + \text{diversity}
    + \beta_e + \beta_r,\;
    0,\; 1\Big)
  \label{eq:formula}
\end{align}
\end{definition}

\noindent where:
\begin{itemize}[nosep]
  \item $\rho$ is the \emph{persistence factor} (default 0.45)
  \item $\delta$ is the \emph{diversity factor} (default 0.15)
  \item $\beta_e$ is the \emph{escalation bonus} (default 0.2), applied
    when 3+ consecutive turns have strictly increasing scores
  \item $\beta_r$ is the \emph{resampling bonus} (default 0.7), applied
    when 3+ consecutive user message pairs have Jaccard trigram
    similarity $> 0.5$
\end{itemize}

The request is blocked when $\text{score} \geq \tau$ (default threshold
$\tau = 0.7$).

\subsection{Escalation Gradient Detection}

Escalation gradient detection identifies conversations where risk scores
are strictly increasing over the final 3+ turns. This captures the
pattern described by Crescendo~\cite{crescendo} and the escalation
family in~\cite{guardingguardrails}, where an attacker probes
incrementally with increasing boldness. When detected, the escalation
bonus $\beta_e$ is added to the final score.

\subsection{Resampling Detection}

Resampling detection targets the finding from Yang et
al.~\cite{yangsimpler} that multi-turn jailbreaks are approximately
equivalent to retrying the same attack. We compute Jaccard similarity on
word-level trigrams between consecutive user messages (after lowercasing
and stripping punctuation). Messages shorter than 20 tokens are excluded
to avoid false positives on short conversational turns. When 3+
consecutive pairs exceed a similarity threshold of 0.5, the resampling
bonus $\beta_r$ is added.

\subsection{Worked Examples}

\paragraph{Example A: Sparse, single category (Allow).}
Four user turns with scores $[0, 0, 0, 0.3]$---only the final turn
matches one category.
\begin{align*}
  \text{peak} &= 0.3, \quad \text{match\_ratio} = 0.25, \quad
  \text{distinct} = 1, \quad \text{diversity} = 0 \\
  \text{score} &= 0.3 + 0.25 \times 0.45 + 0 = \mathbf{0.4125}
  \quad \to \text{Allow}
\end{align*}

\paragraph{Example B: Dense, multi-category (Block).}
Four turns: $[0, 0.3, 0, 0.5]$---two turns match two different
categories.
\begin{align*}
  \text{peak} &= 0.5, \quad \text{match\_ratio} = 0.5, \quad
  \text{distinct} = 2, \quad \text{diversity} = 0.15 \\
  \text{score} &= 0.5 + 0.5 \times 0.45 + 0.15 = \mathbf{0.875}
  \quad \to \text{Block}
\end{align*}

\paragraph{Example C: Persistent, single category (Block).}
Four turns all matching role\_confusion: $[0.5, 0.5, 0.5, 0.5]$.
\begin{align*}
  \text{peak} &= 0.5, \quad \text{match\_ratio} = 1.0, \quad
  \text{distinct} = 1, \quad \text{diversity} = 0 \\
  \text{score} &= 0.5 + 1.0 \times 0.45 + 0 = \mathbf{0.95}
  \quad \to \text{Block}
\end{align*}

\noindent Under the weighted average (Eq.~\ref{eq:weighted-avg}),
Example~C scores exactly 0.5---below threshold, undetected. This is the
core failure case that motivated our work.

% ------------------------------------------------------------------
\section{Evaluation}
\label{sec:eval}
% ------------------------------------------------------------------

\subsection{Setup}

We implemented peak~+~accumulation scoring in
Parapet,\footnote{\url{https://github.com/Parapet-Tech/parapet}} an open-source Rust
HTTP proxy firewall. The evaluation uses an L4-only configuration with
L3 single-turn pattern matching disabled, isolating the multi-turn
scoring layer. This prevents L3 verdicts from masking L4 behavior---a
confound we discovered during initial evaluation when L3 false positives
on benign conversations were incorrectly attributed to L4.

All hyperparameters ($\rho, \delta, \beta_e, \beta_r, \tau$) were tuned
on a validation split. Final metrics are reported once on a fixed,
disjoint 20\% holdout split never used for tuning. Because source
datasets have mixed redistribution licenses, we release the full data
construction and evaluation scripts for reproducibility.

\paragraph{Datasets.}
\begin{itemize}[nosep]
  \item \textbf{Handcrafted attacks} (9 cases): Multi-turn sequences
    exercising instruction seeding, role confusion, deferred authority,
    escalation probing, and resampling. Each sequence has 3--4 user
    turns with assistant responses.
  \item \textbf{WildJailbreak attacks} (579 cases): Multi-turn attack
    conversations constructed from
    WildJailbreak~\cite{wildjailbreak}, a dataset of 262K
    adversarial/benign prompt pairs with labeled tactics. We extracted
    5,529 unique injection sentences across four L4 categories from 82K
    adversarial prompts, then composed multi-turn conversations using
    four strategies: (a)~single-category persistent (179 cases),
    (b)~multi-category combination (300 cases), (c)~escalation gradient
    with benign opening turns (100 cases). This addresses a gap in
    existing resources: no public multi-turn prompt injection dataset
    exists---all multi-turn datasets (SafeMTData, XGuard-Train) target
    content safety, while all injection datasets (deepset,
    HackAPrompt, JailbreakBench) are single-turn.
  \item \textbf{Handcrafted benign} (6 cases): Multi-turn conversations
    about programming, cooking, and general knowledge. Used to verify
    zero false positives on clearly benign content.
  \item \textbf{WildJailbreak sparse-benign} (60 cases): Conversations
    with one injection turn among 3--4 benign turns, labeled benign.
    Tests that the formula correctly allows conversations with isolated
    suspicious phrasing.
  \item \textbf{WildChat benign} (10,000 cases): Real multi-turn
    conversations sampled from WildChat~\cite{wildchat}, a public
    dataset of 838K ChatGPT conversations. Filtered for conversations
    with 3+ user turns where all turns are marked non-toxic. These
    represent organic user behavior including roleplay, code discussion,
    and creative writing.
\end{itemize}

\paragraph{Out-of-scope datasets.} We also evaluated on SafeMTData
(crescendo-style content safety attacks) and Anthropic hh-rlhf (social
engineering). Both scored 0\% recall under both scoring formulas. This is
expected: these attacks use deliberately innocuous language with no
injection phraseology. Proxy-level regex detection cannot detect topic
trajectory escalation---this requires LLM-based semantic
classification~\cite{crescendo}. We exclude these from our results as
they test a fundamentally different capability.

\subsection{Results}

Table~\ref{tab:results} summarizes peak + accumulation performance on
the holdout evaluation corpus of 10,654 conversations.

\begin{table}[h]
\centering
\caption{Evaluation results on 10,654 multi-turn conversations
($\rho = 0.45$, $\tau = 0.7$).}
\label{tab:results}
\begin{tabular}{@{}lr@{}}
\toprule
Metric & Value \\
\midrule
True Positives & 534 / 588 \\
False Negatives & 54 / 588 \\
False Positives & 121 / 10,066 \\
True Negatives & 9,945 / 10,066 \\
\midrule
Recall & 90.8\% \\
Precision & 81.5\% \\
F1 & 85.9\% \\
False Positive Rate & 1.20\% \\
Accuracy & 98.4\% \\
\bottomrule
\end{tabular}
\end{table}

\paragraph{Per-dataset breakdown.}

\begin{table}[h]
\centering
\caption{Per-dataset results with peak + accumulation scoring.}
\label{tab:per-dataset}
\begin{tabular}{@{}lcccr@{}}
\toprule
Dataset & Label & Total & Correct & Accuracy \\
\midrule
multiturn\_attacks & malicious & 9 & 9 & 100.0\% \\
wildjailbreak\_attacks & malicious & 579 & 525 & 90.7\% \\
multiturn\_benign & benign & 6 & 6 & 100.0\% \\
wildjailbreak\_sparse\_benign & benign & 60 & 60 & 100.0\% \\
wildchat\_benign & benign & 10,000 & 9,879 & 98.8\% \\
\midrule
\textbf{Total} & & \textbf{10,654} & \textbf{10,479} & \textbf{98.4\%} \\
\bottomrule
\end{tabular}
\end{table}

\subsection{Analysis}

\paragraph{Recall.} Peak + accumulation achieves 90.8\% recall across
588 attack conversations. All 9 handcrafted attacks are detected.
Of the 579 WildJailbreak-sourced attacks, 525 (90.7\%) are detected.
The 54 false negatives fall into two categories: 39 are single-category
conversations using only low-weight categories (escalation\_probing or
deferred\_authority at weight 0.3). These misses are not fully
persistent; their effective match ratios are low enough that
$0.3 + \text{match\_ratio} \times 0.45$ remains below the 0.7
threshold. The remaining 15 are escalation-gradient conversations where
benign opening turns further dilute the match ratio.

\paragraph{False positives.} The 121 false positives (1.20\% FPR) come
from WildChat conversations containing phrases that match L4 regex
patterns: roleplay scenarios (``you are now in character as...''),
legitimate discussions about AI capabilities (``switch to a different
mode''), and creative writing with authority figures. These are pattern
specificity issues, not scoring issues---improving the regex patterns
would address them without changing the formula. The sparse-benign
dataset (60 conversations with one injection turn among benign turns)
produces zero false positives, confirming that isolated suspicious
phrasing does not trigger blocks.

\paragraph{Precision.} Precision of 81.5\% reflects the ratio of true
attacks among all flagged conversations. This is a substantial
improvement over what a small-corpus evaluation would suggest (where
base rate effects dominate), and demonstrates that the pattern library
is specific enough for production use with a 1.20\% FPR.

\subsection{Scope and Limitations}

\paragraph{Content safety attacks are out of scope.} Crescendo-style
attacks~\cite{crescendo} use deliberately innocuous language.
Proxy-level regex cannot detect topic trajectory escalation. This is a
fundamental limitation of the proxy-level approach, not of the scoring
formula.

\paragraph{Synthetic attack corpus.} While we expanded from 9
handcrafted attacks to 588 conversations using WildJailbreak as a
phrasebook, the multi-turn structure is synthetically composed. No
public multi-turn prompt injection dataset exists---existing multi-turn
datasets (SafeMTData, XGuard-Train) target content safety, while
injection datasets (deepset, HackAPrompt, JailbreakBench) are
single-turn. Our eval harness accepts new datasets as YAML files, and
we encourage the community to contribute real multi-turn injection
traces.

\paragraph{Pattern brittleness.} Regex patterns can be evaded through
rephrasing, encoding tricks, or indirect phrasing. This is a known
limitation of all pattern-based approaches~\cite{attackermovessecond}.
Peak + accumulation scoring is orthogonal to pattern quality---it
correctly aggregates whatever signals the patterns produce. Improving
pattern robustness is complementary future work.

% ------------------------------------------------------------------
\section{Discussion}
\label{sec:discussion}
% ------------------------------------------------------------------

\subsection{Why This Gap Exists}

Multi-turn detection research focuses on LLM-based classification
because researchers have access to model inference and optimize for
accuracy. The proxy-level constraint---no LLM available, decisions must
be deterministic and sub-millisecond---is treated as a deployment detail
rather than a research problem. But proxy firewalls are widely deployed
in production, and their operators need principled scoring formulas, not
just ``run a classifier.''

\subsection{Parameter Sensitivity}
\label{sec:sensitivity}

The persistence factor $\rho$ is the highest-leverage parameter in the
formula, governing how much accumulated evidence contributes beyond the
peak score. We swept $\rho$ from 0.15 to 0.65 in increments of 0.025 on
the validation split, then fixed $\rho$ before holdout evaluation.
Figure~\ref{fig:sensitivity} shows the sweep results.

\begin{figure}[t]
\centering
\includegraphics[width=\columnwidth]{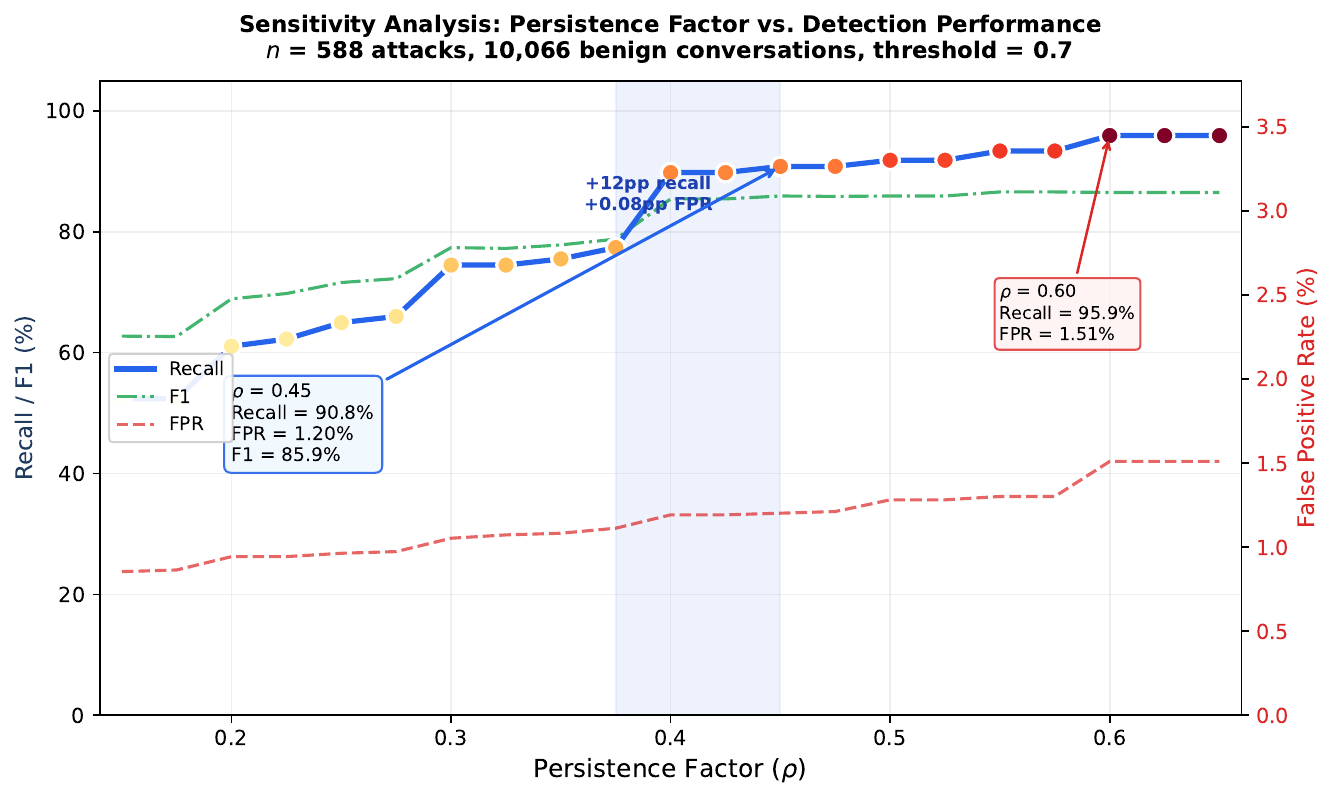}
\caption{Sensitivity analysis: persistence factor $\rho$ vs.\ detection
performance. A phase transition occurs at $\rho \approx 0.4$, where
recall jumps 12 percentage points with only 0.08pp increase in FPR.
The sweet spot at $\rho = 0.45$ maximizes F1 at 85.9\%.}
\label{fig:sensitivity}
\end{figure}

\paragraph{Phase transition at $\rho \approx 0.4$.}
The most striking feature is the discontinuity between $\rho = 0.375$
and $\rho = 0.400$: recall jumps from 77.4\% to 89.8\%---a gain of
12.4 percentage points---while FPR increases by only 0.08pp (1.11\% to
1.19\%). This phase transition has a precise mathematical explanation:
the escalation\_probing and deferred\_authority categories both have
weight 0.3. For a fully persistent conversation ($\text{match\_ratio}
= 1.0$), the score is $0.3 + \rho$. At $\rho = 0.375$, this equals
0.675 (below threshold); at $\rho = 0.400$, it equals 0.700 (at
threshold). The entire population of single-category weight-0.3 attacks
crosses the detection boundary simultaneously.

\paragraph{Sweet spot at $\rho = 0.45$.}
We select $\rho = 0.45$ as the default. At this setting, recall is
90.8\%, FPR is 1.20\%, and F1 peaks at 85.9\%. The 0.05 margin above
the phase transition provides robustness: conversations with
$\text{match\_ratio} < 1.0$ (e.g., 3 out of 4 turns matching) still
score $0.3 + 0.75 \times 0.45 = 0.6375$, which remains below threshold
and avoids false positives on partially-matching benign conversations.

\paragraph{Diminishing returns beyond $\rho = 0.5$.}
Above $\rho = 0.5$, recall continues to climb slowly (91.8\% at
$\rho = 0.5$, 95.9\% at $\rho = 0.6$) but FPR increases more steeply
(1.28\% to 1.51\%). The marginal recall per unit FPR degrades: from
0.375 to 0.45, each 0.01pp of FPR buys 1.5pp of recall; from 0.5 to
0.6, each 0.01pp buys only 0.18pp of recall.

\paragraph{Other parameters.}
The remaining parameters have straightforward rationale:
\begin{itemize}[nosep]
  \item $\delta = 0.15$ (diversity factor): One additional category adds
    meaningful signal without dominating the score. Two additional
    categories ($+0.30$) become a strong signal.
  \item $\beta_e = 0.2$ (escalation bonus): A strong signal but
    insufficient alone to cause a block at threshold 0.7.
  \item $\beta_r = 0.7$ (resampling bonus): Near-threshold alone,
    because confirmed retry behavior is a strong independent signal
    per~\cite{yangsimpler}.
\end{itemize}

\subsection{Integration with Layered Defense}

Peak + accumulation scoring is designed as one layer in a
defense-in-depth architecture:

\begin{itemize}[nosep]
  \item \textbf{L0}: Unicode normalization, encoding hygiene (NFKC,
    zero-width character removal, HTML stripping)
  \item \textbf{L3}: Single-turn pattern matching (inbound request
    scanning, outbound tool call validation)
  \item \textbf{L4}: Multi-turn scoring (this paper)
  \item \textbf{L5a}: Output scanning (canary token detection, sensitive
    data redaction)
\end{itemize}

Each layer operates independently. L4 activates only when the
conversation contains at least \texttt{min\_user\_turns} (default~2)
user messages. L0 normalization runs before L4, ensuring that encoding
bypass attempts (fullwidth characters, invisible Unicode, HTML injection)
are neutralized before pattern matching.

% ------------------------------------------------------------------
\section{Conclusion}
\label{sec:conclusion}
% ------------------------------------------------------------------

We presented a fully specified proxy-level formula for multi-turn LLM
attack scoring. The weighted average approach, while intuitive, has a
mathematical ceiling that makes it fundamentally unsuitable for
persistence detection---the exact signal that characterizes multi-turn
attacks. Peak + accumulation scoring fixes this with three additive
signals: peak risk, persistence ratio, and category diversity.

On 10,654 conversations---588 attacks sourced from WildJailbreak
adversarial prompts and 10,066 benign conversations from
WildChat---the formula achieves 90.8\% recall at 1.20\% false positive
rate with an F1 of 85.9\%. A sensitivity analysis over the persistence
parameter reveals a phase transition at $\rho \approx 0.4$ with a
precise mathematical explanation: weight-0.3 pattern categories cross
the detection threshold simultaneously. The chosen default of $\rho =
0.45$ maximizes F1 while maintaining a 0.05 margin above the
transition.

The formula is simple (5 lines of code), fast (microseconds per
request), deterministic, and auditable. It requires no model inference,
no training data, and no GPU. We release the implementation, pattern
library, evaluation harness, and reproducibility artifacts (including
dataset construction scripts and evaluation manifests) as open source.

% ------------------------------------------------------------------
% References
% ------------------------------------------------------------------

\end{document}